\long\def\comment#1{}
\newtheorem{theorem}{Theorem}
\newtheorem{definition}{Definition}
\newtheorem{lemma}{Lemma}
\newtheorem{example}{Example}
\newcommand{\HB}{\textit{HB}}
\newcommand{\KB}{\textit{KB}}
\newcommand{\FP}{\textit{FP}}
\newcommand{\lfp}{\textit{lfp}}
\newcommand{\WFS}{\textit{WFS}}
\newcommand{\Effective}{\textit{Effective}}
\renewcommand{\Pi}{\varPi}
\renewcommand{\Omega}{\varOmega}
\renewcommand{\Psi}{\varPsi}
\renewcommand{\Phi}{\varPhi}
\long\def\comment#1{}
  \title[Theory and Practice of Logic Programming]
        {A Well-Founded Semantics for FOL-Programs}
  \author[Y. Bi.  J. You.  Z. Feng.]
         {Yi Bi$^1$, Jia-Huai You$^2$, Zhiyong Feng$^1$
         \\
          $^1$Tianjin University, Tianjin China\\
           $^2$University of Alberta, Edmonton T6G 2E8, Canada
         }
\begin{document}

\label{firstpage}
\maketitle
\begin{abstract}
\begin{quote}
An FOL-program consists of a background theory in a decidable fragment of first-order logic and a collection of rules possibly containing first-order formulas. The formalism stems from recent approaches to tight integrations of ASP with description logics. In this paper, we define a well-founded semantics for FOL-programs based on a new notion of unfounded sets on consistent as well as inconsistent sets of literals, and study some of its properties. The semantics is defined for all FOL-programs, including those where it is necessary to represent inconsistencies explicitly. The semantics supports a form of combined reasoning by rules under closed world as well as open world assumptions, and it is a generalization of the standard well-founded semantics for normal logic programs. We also show that the well-founded semantics defined here approximates the well-supported answer set semantics for normal DL programs.

To appear in Theory and Practice of Logic Programming (TPLP). 
\end{quote}
\end{abstract}

\begin{keywords}
Logic Programs, Well-Founded Semantics, First-Order Logic.
\end{keywords}

\section{Introduction}
Recent literature has shown extensive interests in combining ASP with fragments of classical logic, such as description logics (DLs) (see, e.g.,  \cite{BruijnET08,Bruijn-2007,EiterILST08,lukas-IEEE,Motik:JACM:2010,Rosati05,Rosati06,ShenW11,YangYF11}).
A program in this context is a combined knowledge base $\KB = (L, \Pi)$, where $L$ is a knowledge base of a decidable fragment of first-order logic and $\Pi$ a set of rules possibly containing
first-order formulas or interface facilities.
In this paper, we use {\em FOL-program} as an umbrella term for approaches
that allow first-order formulas to appear in rules (the so-called {\em tight} integration), for generality. The goal of this paper is to formulate a well-founded semantics for these programs with the following features.
\begin{itemize}
  \item The class of all FOL-programs are supported.
  \item Combined reasoning with closed world as well as open world assumptions is supported.
\end{itemize}

Under the first feature, we shall allow an atom with its predicate shared in the first-order theory $L$ to appear in a rule head. This can result in two-way flow of information
and enable inferences within each component automatically.
For example,
assume $L$ contains a formula that says students are entitled to educational discount, $\forall x ~St(x) \supset  EdDiscount(x)$.  Using the notation of DL, we would write $St \sqsubseteq EdDiscount$.
Suppose in an application anyone who is not employed full time but registered for an evening class is given the benefit of a student.
We can write a rule
$$St(X) \leftarrow  EveningClass(X), not~ HasJob(X).$$
Thus, that such a person enjoys educational discount can be inferred directly from the underlying knowledge base $L$.

To support all FOL-programs, we need to consider the possibility of inconsistencies arising in the construction of the intended well-founded semantics. For example, consider an FOL-program, $\KB=(L,\Pi)$, where
$L = \{\forall x A(x) \supset C(x), \neg C(a)\}$ and $\Pi = \{A(a) \leftarrow not~B(a); \\
B(a) \leftarrow B(a) \}$.  Suppose the Herbrand base is $\{A(a),B(a)\}$.
In an attempt to compute the well-founded semantics of $\KB$ by an iterative process, we begin with  the empty set; while $L$ entails $\neg A(a)$,
since
$B(a)$ is false by closed world reasoning, we derive $A(a)$ resulting in an
inconsistency.  This reasoning process suggests  that during an iterative process  a consistent set of literals may be mapped to an inconsistent one and, in general, whether  inconsistencies arise or not is not
known {\em a priori} without actually performing the computation.

That the well-founded semantics of an FOL-program is defined by an inconsistent set can be useful on its own, or in the computation of (suitably defined) answer sets of the program.  If we have computed the well-founded semantics which is inconsistent, we need not pursue the task of computing answer sets of the same program, because they do not exist.

In complex real world reasoning by rules, it is sometimes desirable that not all predicates are reasoned with under the closed world assumption.  Some conditions may need to be established under the open world assumption. We call this {\em combined reasoning}.
For example, we may write a rule
$$PrescribeTo(X, Q) \leftarrow \Effective(X, Z), Contract(Q, Z), \neg AllergicTo(Q, X)$$
to describe that an antibiotic is prescribed to a patient who contracted a bacterium,
if the antibiotic against that bacterium is effective and patient is not allergic to it.  Though $\Effective$ can be reasoned with under the closed world assumption,  it may be  preferred to judge whether a patient is not allergic to an antibiotic under the open world assumption, e.g., it holds if it can be proved classically. This is in contrast with closed world reasoning whereas one may infer nonallergic due to lack of evidence for allergy.

To our knowledge, there has been no well-founded semantics defined for all  FOL-programs.  The closest that one can find is the definition for a subset of FOL-programs \cite{lukas-IEEE}, which relies on syntactic restrictions so that the least fixpoint is computed over consistent sets of literals. To ensure that the construction is well-defined,
 it is assumed that DL axioms must be, or can be converted to, {\em tuple generating dependencies} (which are essentially Horn rules) plus constraints.  Thus, the approach cannot be lifted to handle first-order formulas in general. In addition, to the best of our knowledge, no combined reasoning is ever supported under any well-founded semantics.

As motivated above, in this paper we
first define a well-founded semantics for FOL-programs based on a new notion of unfounded sets. We show that the semantics generalizes the well-founded semantics for normal logic programs.  Also, we prove that the well-founded semantics defined here approximates the well-supported answer set semantics for the language
of \cite{ShenW11}; namely, all well-founded atoms (resp. unfounded atoms) of a program remain to be true (resp. false) in any well-supported answer set. This makes it possible to use the mechanism of constructing the well-founded semantics as constraint propagation in an implementation of  computing well-supported answer sets.

The paper is organized as follows.
The next section introduces the language and notations.  In Section \ref{WFS} we define the well-founded semantics. Section \ref{p} studies some properties and relates to the well-supported answer set semantics, followed by related work in Section \ref{r}.  Section \ref{f} concludes the paper and points to future directions.

\section{Language and Notation}

We assume a language of a decidable fragment of first-order logic, denoted ${\cal L}_\Sigma$, where $\Sigma= \langle F^n; P^n\rangle$,  called a signature,  and $F^n$ and $P^n$ are disjoint countable sets of $n$-ary function and $n$-ary predicate symbols, respectively. Constants are 0-ary functions. {\em Terms} are variables, constants, or functions in the form $f(t_1,...,t_n)$, where each $t_i$ is a term and $f \in F^n$.  {\em First-order formulas}, or just {\em formulas}, are defined as usual, so are the notions of {\em satisfaction}, {\em model}, and {\em entailment}.

Let $\Phi_P$ be a finite set of predicate symbols and $\Phi_C$ a nonempty finite set of constants such that $\Phi_C \subseteq F^n$.
An {\em atom} is of the form $P(t_1,...,t_n)$ where $P\in \Phi_P$
and each $t_i$ is either a constant from $\Phi_C$ or a variable. A {\em negated atom} is of the form $\neg A$ where $A$ is an atom. We do not assume any other restriction on the vocabularies, that is, $\Phi_P$ and $P^n$ may have predicate symbols in common.



An {\em FOL-program}
is a combined knowledge base $\KB = (L,\Pi)$,
where $L$ is a first-order theory of ${\cal L}_\Sigma$
and $\Pi$ a {\em rule base}, which is a finite collection of rules of  the form
\begin{eqnarray}
\label{rule}
H \leftarrow A_1,\ldots, A_m, not ~B_1, \ldots, not ~B_n
\end{eqnarray}
where $H$ is an atom, and $A_i$ and $B_i$ are atoms or formulas. By abuse of terminology, each $A_i$ is called a {\em positive literal} and each $not~B_i$ is called a {\em negative literal}.


For any rule $r$, we denote by $head(r)$ the head of the rule and $body(r)$ its body, and we define $pos(r) = \{A_1, ..., A_m\}$ and $neg(r) = \{B_1, ..., B_n\}$.

A {\em ground instance} of a rule $r$ in $\Pi$ is obtained by replacing every free variable with a constant from $\Phi_C$. In this paper, we assume that a rule base $\Pi$ is already grounded if not said otherwise.
When we refer to an atom/literal/formula, by default we mean it is a ground one.

Given an FOL-program $\KB = (L, \Pi)$, the {\em Herbrand base} of $\Pi$, denoted $\HB_\Pi$, is the set of all ground atoms
$P(t_1,...,t_n)$, where $P\in \Phi_P$
occurs in $\KB$ and $t_i \in \Phi_C$.

We denote by $\Omega$ the set of all predicate symbols appearing in $\HB_\Pi$ such that $\Omega \subseteq P^n$.  For distinction, we call atoms whose predicate symbols are not in $\Omega$ {\em ordinary}, and all the other formulas {\em FOL-formulas}.  If $L =\emptyset$ and $\Pi$ only contains rules of the form (\ref{rule}) where all $H$, $A_i$ and $B_j$ are ordinary atoms, then $\KB$ is called a {\em normal logic program}.

Any subset $I \subseteq \HB_\Pi$ is called an {\em interpretation} of $\Pi$.  It is also called a {\em total} interpretation or a {\em 2-valued} interpretation.
If $I$ is an interpretation, we define
$\bar{I}= \HB_\Pi \backslash I$.

Let  $Q$ be a set of atoms. We define $\neg. Q = \{\neg A~|~ A \in Q\}$.
For a set of atoms and negated atoms $S$, we
define
$S^+ = \{ A \,|\, A \in S\}$, $S^- =\{ A \,|\, \neg A \in S\}$, and
$S|_\Omega = \{ A \in  S~|~pred(A) \in \Omega\}$, where $pred(A)$ is the predicate symbol of $A$.
Let $Lit_\Pi = \HB_\Pi \cup \neg.\HB_\Pi$.
A subset $S \subseteq Lit_\Pi$ is consistent if $S^+ \cap S^- = \emptyset$.
For a first-order theory $L$, we say that $S \subseteq Lit_\Pi$ is {\em consistent with $L$} if the first-order theory $L\cup S|_\Omega$ is consistent (i.e., the theory is {\em satisfiable}).
Note that when we say $S$ is consistent with $L$, both $S$ and $L$ must be consistent. Similarly,  a (2-valued) interpretation $I$ is consistent with $L$ if $L \cup I|_\Omega \cup \neg.\bar{I}|_\Omega$ is consistent. We denote by $Lit_\Pi^c$ the set of all consistent subsets of $Lit_\Pi$. For any $S \in Lit_\Pi^c$, $S'$ is called a {\em consistent extension} of $S$ if $S \subseteq S' \in Lit_\Pi^c$.

\begin{definition}
\label{2-valued}
  Let $\KB = (L,\Pi)$ be an FOL-program and $I \subseteq \HB_\Pi$ an interpretation. Define
the satisfaction relation under $L$, denoted $\models_L$, as follows (the definition extends to conjunctions of literals):
\begin{enumerate}
  \item For any ordinary atom $A \in \HB_\Pi$, $I\models_L A$ if $A\in I$ and $I \models not ~A$ if $A \not \in I$.
  \item For any FOL-formula $A$, $I\models_L A$ if $L \cup I|_\Omega \cup \neg.\bar{I}|_\Omega \models A$, and
  $I  \models_L not~A$ if $I \not \models_L  A$.
\end{enumerate}
\end{definition}

Let $\KB=(L,\Pi)$ be an FOL-program.
For any  $r \in \Pi$ and $I \subseteq \HB_\Pi$,
$I \models_L r$ if $I \not\models_L body(r)$ or $I \models_L head(r)$.
$I$ is a {\em model} of $\KB$ if $I$ is consistent with $L$ and $I$ satisfies all rules in $\Pi$.


\begin{example}
\label{1}
To illustrate the flexibility provided by the parameter $\Omega$, suppose we have a program $\KB = (L,\Pi)$ where $\Pi$ contains a rule that says any unemployed with disability receives financial assistance, with an FOL-formula in the body
$$Assist(X) \leftarrow Disabled(X), not~Employed(X)$$
Assume $\Omega = \Phi_P = \{Assist, Employed\}$. Then,  $Employed$ is interpreted under the closed world assumption
and $Disabled$ under the open world assumption. Indeed, unemployment can be established by closed world reasoning for lack of evidence of employment, but
disability requires a direct proof.
\end{example}

\section{Well-Founded Semantics}
\label{WFS}

We first define the notion of unfounded set.
Intuitively, the atoms in an unfounded set can be safely assigned to false, due to persistent inability to derive their positive counterparts.

\begin{definition}
\label{u}
{\bf (Unfounded set)} Let $\KB = (L,\Pi)$ be an FOL-program and $I \subseteq Lit_\Pi$. If $I \cup L$  is consistent, then a set $U \subseteq \HB_\Pi$ is an {\em unfounded set} of $\KB$ relative to $I$ iff
for every $H\in U$ and $r\in \Pi$, both of the following conditions are satisfied
\begin{itemize}
\item [(a)] If $head(r) = H$, then
\vspace{-0.1in}
  \begin{itemize}
  \item[(i)] $\neg A \in I \cup \neg .U$ for some ordinary atom $A\in pos(r)$, or
  \item[(ii)] $B\in I$ for some ordinary atom $B\in neg(r)$, or
  \item[(iii)] for some FOL-formula $A\in pos(r)$, it holds that $L\cup S|_\Omega \not \models A$,
for all $S \in Lit_\Pi^c$ with $I\cup \neg.U\subseteq S$, or
  \item[(iv)] for some FOL-formula $A\in neg(r)$, $L  \cup   I |_\Omega  \models A$.
\end{itemize}
 \item[(b)]
$L \cup S|_\Omega \not \models H$ for all $S\in Lit_\Pi^c$ with $I\cup \neg.U\subseteq S$.
\end{itemize}
If $I \cup L$ is inconsistent, the unfounded set of $\KB$ relative to $I$ is $\HB_\Pi$.
\end{definition}


That $H$ is unfounded relative to $I$  if both conditions (a) and (b) are satisfied when $I \cup L$ is consistent;
 in particular, condition (a.iii) ensures that a positive occurrence of an FOL-formula in the rule body is not entailed, for all consistent extensions of $I \cup \neg . U$; and
condition (b) ensures the inability to infer its positive counterpart, independent of any rules.

An FOL-formula may contain shared predicates in $\Omega$, and those not in $\Omega$ hence not shared. The latter are supposed to be interpreted under  open world assumption. Continuing with Example \ref{1} above,  let $\KB= (L, \Pi)$, where
$$
\begin{array}{ll}
L = \{\forall x ~Certified(x) \supset Disabled(x)\} \\
\Pi = \{Assist(a) \leftarrow Disabled(a), not~Employed(a)\}
\end{array}
$$
Assume $Assist, Employed \in \Omega$ while
$Certified$ and $Disabled$ are not.
Let $\Phi_C = \{a\}$, and thus $\HB_\Pi = \{Assist(a), Employed(a)\}$.
Clearly, $\{Assist(a), Employed(a)\}$ is an unfounded set relative to $I = \emptyset$, in particular because $Disabled(a)$ is not derivable under all consistent extensions of $I$.  Note that, since $Disabled(a)$ is not in $\HB_\Pi$, it is not part of an unfounded set.\footnote{Placed under the context of 2-valued logic, the reasoning here is analogue to parallel
circumscription \cite{McCarthy-circ}, where
the predicates $Employed$ and $Assist$ are minimized with predicates $Certified$ and $Disabled$ varying.
}

\begin{lemma}
Let $\KB = (L,\Pi)$ be an FOL-program and $I \subseteq Lit_\Pi$.
A set of unfounded sets of $\KB$ relative to $I$ is closed under union, and
the greatest unfounded set of $\KB$ relative to $I$ exists, which is the union of all unfounded sets of $\KB$ relative to $I$.
\end{lemma}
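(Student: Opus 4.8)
The plan is to reduce both claims to a single monotonicity property of the defining conditions of Definition~\ref{u}, and then derive closure under (arbitrary) union and the existence of the greatest element as consequences. First I would dispose of the inconsistent case: if $I \cup L$ is inconsistent, then by definition the only unfounded set of $\KB$ relative to $I$ is $\HB_\Pi$, so the family of unfounded sets is the singleton $\{\HB_\Pi\}$, which is trivially closed under union and has $\HB_\Pi$ as its greatest element. Hence it suffices to treat the case where $I \cup L$ is consistent, where conditions (a) and (b) apply.

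The key lemma I would isolate is the following monotonicity statement: fix $H \in \HB_\Pi$, consider a rule $r \in \Pi$ with $head(r) = H$, and suppose $U \subseteq U' \subseteq \HB_\Pi$. If conditions (a) and (b) hold for $H$ and $r$ relative to $U$, then they also hold relative to $U'$. I would verify this condition by condition. Conditions (a.ii) and (a.iv) do not mention $U$ at all, so they transfer verbatim. For (a.i), enlarging $U$ to $U'$ only enlarges $\neg.U$ to $\neg.U'$, so $\neg A \in I \cup \neg.U$ immediately yields $\neg A \in I \cup \neg.U'$. The crucial cases are (a.iii) and (b): since $\neg.U \subseteq \neg.U'$ we have $I \cup \neg.U \subseteq I \cup \neg.U'$, and therefore the family $\{\,S \in Lit_\Pi^c \mid I \cup \neg.U' \subseteq S\,\}$ is contained in $\{\,S \in Lit_\Pi^c \mid I \cup \neg.U \subseteq S\,\}$. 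Both (a.iii) and (b) have the form ``for all such $S$, a certain formula is not entailed,'' so restricting the universal quantifier to the smaller family shows that the statement for $U$ implies the statement for $U'$.

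Granting the monotonicity lemma, closure under union is immediate, even for arbitrary (not merely finite) families. Let $\{U_\alpha\}$ be any family of unfounded sets of $\KB$ relative to $I$ and put $U = \bigcup_\alpha U_\alpha$. Take any $H \in U$; then $H \in U_\beta$ for some index $\beta$, so conditions (a) and (b) hold for $H$ (and every rule $r$ with $head(r)=H$) relative to $U_\beta$, and since $U_\beta \subseteq U$ the monotonicity lemma lifts them to $U$. As $H$ was arbitrary, $U$ is an unfounded set relative to $I$. For the greatest unfounded set, I would note that $\emptyset$ is vacuously an unfounded set, so the family of all unfounded sets relative to $I$ is nonempty; let $U^*$ be its union. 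By closure under union $U^*$ is itself an unfounded set, and by construction it contains every unfounded set, so it is the greatest one and equals the union of all unfounded sets, as claimed.

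I expect the only real subtlety to lie in the quantifier-direction argument for (a.iii) and (b): one must observe that \emph{enlarging} the unfounded set $U$ \emph{shrinks} the family of admissible consistent extensions $S$, so that a universally quantified non-entailment becomes easier, not harder, to satisfy. A related edge case worth checking explicitly is when $I \cup \neg.U'$ is itself inconsistent (for instance when some atom of $U'$ already occurs positively in $I$): then there is no consistent extension $S$ at all, and (a.iii) and (b) hold vacuously, which is perfectly consistent with the monotone direction since an empty family trivially validates a universal statement. Once this direction is pinned down, the remaining verifications are routine.
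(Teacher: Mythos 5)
Your proof is correct and follows essentially the same route as the paper's: both rest on the single observation that enlarging $U$ shrinks the family of consistent extensions of $I \cup \neg.U$, so the universally quantified non-entailment conditions (a.iii) and (b) transfer to the union, while the remaining conditions are unaffected or trivially monotone. Your write-up is merely more explicit (isolating the monotonicity step as a lemma, handling arbitrary rather than binary unions, and noting the vacuous and inconsistent edge cases), but the underlying argument is the paper's.
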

\begin{proof}
If $I$ is inconsistent, the claims hold trivially. For a consistent $I$,
suppose $U_1$, $U_2\subseteq \HB_\Pi$ are unfounded sets (of $\KB$ relative to $I$), we show that $U_1 \cup U_2$ is also an unfounded set.
Let $A\in U_1$. Since both (a) and (b) in Definition \ref{u} hold for $U_1$ and $U_2$ separately, in particular, each consistent extension of $I \cup \neg . (U_1 \cup U_2)$ is a consistent extension of $I \cup \neg . U_1$,
(a) and (b) also hold for $U_1 \cup U_2$. Thus $A \in U_1 \cup U_2$.
By symmetry,
the same argument applies to $U_2$.
Therefore, the union of
all unfounded sets is an unfounded set, which is the greatest among all unfounded sets.\end{proof}

We define the operators which will be used to define the well-founded semantics.

\begin{definition}
\label{7}
Let $\KB = (L,\Pi)$ be an FOL-program. Define $T_\KB$, $U_\KB$, $Z_\KB$
as mappings of $2^{Lit_\Pi} \rightarrow \HB_\Pi$,
and $W_\KB$ as a mapping of $2^{Lit_\Pi} \rightarrow 2^{Lit_\Pi}$, as follows:
\begin{itemize}
  \item[(i)] If $I \cup L$ is inconsistent, then $T_\KB (I) \!=\! \HB_\Pi$; otherwise,
$H  \in  T_\KB (I)$ iff $H  \in \HB_\Pi$ and
either (a) or (b) below holds
  \begin{itemize}
  \item[(a)]  some $r\in \Pi$ with $head(r) = H$ exists such that
\vspace{-.02in}
    \begin{itemize}
      \item[(1)]  for any ordinary atom $A$, $A \in I$ if $A \in pos(r)$ and $\neg A \in I$ if $A \in neg(r)$,
\item[(2)]
for any FOL-formula $A\in pos(r)$, $L \cup I|_\Omega \models A$, and
      \item[(3)]  for any FOL-formula $B\in neg(r)$, $L\cup S|_\Omega \not \models B$, for all  $S \in Lit_\Pi^c$ with $I\subseteq S$.
    \end{itemize}
  \item[(b)] $L \cup I|_\Omega \models H$.
  \end{itemize}
  \item[(ii)]$U_\KB (I)$ is the greatest unfounded set of $\KB$ relative to $I$.
\item
[(iii)] $Z_\KB(I) =  \{ A \in \HB_\Pi~|~ L \cup I |_\Omega \models \neg A\}$.
\item[(iv)]
$W_\KB (I) = T_\KB (I)\cup \neg.U_\KB (I) \cup \neg . Z_\KB (I)$.
\end{itemize}
\end{definition}

The operator $T_\KB$ is a consequence operator.
An atom is a consequence,  either due to a derivation via a rule (case (i.a)), or because it is entailed by $L$, given $I$ (case (i.b)). In the first case, the body of such a rule should be satisfied not only by $I$, but
by all consistent extensions of $I$. In
the case (i.a.1)  or (i.a.2), it is sufficient that the body is satisfied by $I$ only  because  the classical entailment relation is monotonic. For the  case (i.a.3) the condition needs to  be stated explicitly.

There are two features in this definition that are non-conventional. The first is
the operator $Z_\KB$ \-- interacting an FOL knowledge base with rules may result in {\em direct negative consequences}. In the second, all operators here are defined on all subsets of $Lit_\Pi$, including inconsistent ones.\footnote{When inconsistency arises, the fixpoint operator here leads to triviality. This is the most common treatment when
the underlying entailment relation is the classical one.
However, we remark that this is only one possible choice.
}

\begin{lemma}
The operators $T_\KB$, $U_\KB$, $Z_\KB$,
and $W_\KB$ are all monotonic.
\end{lemma}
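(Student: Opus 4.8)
The plan is to fix $I_1 \subseteq I_2$ in $2^{Lit_\Pi}$, ordered by inclusion, and to show $f(I_1)\subseteq f(I_2)$ for each $f\in\{T_\KB,U_\KB,Z_\KB,W_\KB\}$. First I would dispose of the inconsistency cases. If $I_2\cup L$ is inconsistent, then $T_\KB(I_2)=U_\KB(I_2)=\HB_\Pi$ by Definitions~\ref{u} and~\ref{7}, and, since $L\cup I_2|_\Omega$ is unsatisfiable and therefore entails every $\neg A$, also $Z_\KB(I_2)=\HB_\Pi$; so every inclusion is immediate. Hence I may assume $I_2\cup L$ is consistent. Because a subset of a consistent set of literals is consistent and a subset of a satisfiable theory is satisfiable, $I_1\cup L$ is then consistent too, placing both arguments in the ``consistent'' branch of Definition~\ref{7}.

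For $Z_\KB$ the claim is immediate from monotonicity of classical entailment: $I_1|_\Omega\subseteq I_2|_\Omega$, so $L\cup I_1|_\Omega\models\neg A$ implies $L\cup I_2|_\Omega\models\neg A$. For $T_\KB$ I would take $H\in T_\KB(I_1)$ and trace the two ways it can be produced. If $H$ arises from clause (b), i.e. $L\cup I_1|_\Omega\models H$, then again entailment monotonicity gives $L\cup I_2|_\Omega\models H$. If $H$ arises from a rule $r$ via clause (a), I check that each of (a.1), (a.2), (a.3) is preserved when $I_1$ is replaced by $I_2$: the membership conditions (a.1) persist because $I_1\subseteq I_2$; the entailment (a.2) persists by monotonicity; and for the non-entailment condition (a.3) I use the key observation that the family $\{S\in Lit_\Pi^c : I_2\subseteq S\}$ is contained in $\{S\in Lit_\Pi^c : I_1\subseteq S\}$, so a statement required of all consistent extensions of $I_1$ holds \textit{a fortiori} for all consistent extensions of $I_2$.

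For $U_\KB$ I would reduce to the preceding lemma (existence of a greatest unfounded set). It suffices to show that $U:=U_\KB(I_1)$ is itself an unfounded set of $\KB$ relative to $I_2$, since then the greatest such set $U_\KB(I_2)$ contains $U$. Fixing $H\in U$ and $r\in\Pi$ with $head(r)=H$, I verify conditions (a) and (b) of Definition~\ref{u} with $I_2$ in place of $I_1$: conditions (i), (ii), (iv) transfer exactly as the corresponding clauses of $T_\KB$, using $I_1\cup\neg.U\subseteq I_2\cup\neg.U$ and entailment monotonicity, while conditions (iii) and (b) transfer by the same shrinking-extensions observation applied to $I_1\cup\neg.U\subseteq I_2\cup\neg.U$. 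Finally $W_\KB$ is monotone because it is assembled from the monotone maps $T_\KB,U_\KB,Z_\KB$ by the order-preserving operations $U\mapsto\neg.U$ and set union. The one genuinely delicate point — and the only place where monotonicity is not a one-line consequence of entailment monotonicity — is the universally quantified non-entailment conditions (a.3), (iii), (b): these appear antitone in $I$, and the argument hinges on recognizing that enlarging $I$ only prunes the collection of consistent extensions over which the condition must hold, so the condition is in fact preserved upward.
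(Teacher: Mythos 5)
Your proof is correct and follows essentially the same route as the paper's much terser argument, which likewise rests on the two observations you isolate: positive entailment conditions persist by monotonicity of classical entailment, and the universally quantified non-entailment conditions persist because enlarging $I$ only shrinks the family of consistent extensions (the paper handles $U_\KB$ with ``the same argument applies,'' which your reduction via the greatest-unfounded-set lemma makes explicit). One small imprecision worth noting: inconsistency of $I_2\cup L$ need not force $L\cup I_2|_\Omega$ to be unsatisfiable (the clash could involve only ordinary atoms outside $\Omega$), so $Z_\KB(I_2)$ need not equal $\HB_\Pi$ in that branch --- but this is harmless, since your direct entailment-monotonicity argument for $Z_\KB$ already covers all cases.
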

\begin{proof}
Let $I_1 \subseteq I_2 \subseteq Lit_\Pi$ and $H \in T_\KB (I_1)$. Since
 any condition in part (i) of Definition \ref{7} that applies under $I_1$ applies under $I_2$ (including the case where one of them, or both, are  inconsistent with $L$),
thus the set of all consistent extensions of $I_2$ is a
subset of all consistent extensions of $I_1$, and therefore we have $T_\KB(I_1) \subseteq T_\KB(I_2)$.
The same argument applies to $U_\KB$ and $Z_\KB$.
Since $T_\KB$, $U_\KB$, and $Z_\KB$ are monotonic, it follows from the definition that the operator $W_\KB$ is also monotonic.
\end{proof}

As $W_\KB$ is monotone on the complete lattice $\langle 2^{Lit_\Pi}, \subseteq \rangle$, according to  the Knaster\--Tarski fixpoint theorem \cite{Tarski1955},
its least fixpoint, $\lfp(W_\KB)$, exists.

\begin{definition}
\label{wfs}
Let $\KB = (L,\Pi)$ be an FOL-program. The {\em well-founded semantics}  of $\KB$ (relative to $\Omega$) 
 is defined by $\lfp(W_\KB)$.
\end{definition}

We allow the well-founded semantics of an FOL-program to be defined by an inconsistent set, independent of how the semantics may be used.  This may be utilized in the computation of answer sets. Suppose under a suitable definition of answer sets for an FOL-program $\KB$, $\lfp(W_\KB)$ approximates all answer sets of $\KB$.\footnote{We will show later in this paper that the well-supported answer sets of \cite{ShenW11} fall into this category.}
 If the computed $\lfp(W_\KB)$ is inconsistent then we know $\KB$ has no answer sets.


\begin{example}
Let $\KB = (\{\neg A(a)\}, \Pi)$ where $\Pi =
\{A(a) \leftarrow not~B(a), B(a) \leftarrow B(a)\}$.
Let $\Omega = \Phi_P =  \{A,B\}$ and $\Phi_C = \{a\}$.
$\lfp(W_\KB)$ is constructed as follows   (where $W_\KB^0 = \emptyset$ and
$W_\KB^{i+1} = W_\KB(W^i_\KB)$, for all $i \geq 0$):
$$
\begin{array}{ll}
W_\KB^0 = \emptyset,\\
W_\KB^1 = \{\neg A(a), \neg B(a)\},\\
W_\KB^2 = \{\neg A(a), \neg B(a), A(a)\},\\
W_\KB^3 = Lit_\Pi,\\
W_\KB^4 = W_\KB^3.
\end{array}
$$
As a result, the well-founded semantics of $\KB$ is inconsistent.
It is interesting to note that $\KB$ has a model, $\{B(a)\}$. This means that we cannot determine whether the well-founded semantics for an FOL-program is consistent or not, based on the existence of a model; when an iterative process is carried out, we have to deal with the possibility that inconsistencies may arise.
\end{example}

\begin{example}
\label{example2}
Consider $\KB = (L, \Pi)$ where $L = \{\forall x B(x) \supset A(x), \neg  A(a) \vee  C(a)\}$ and $\Pi$ consists of
$$
\begin{array}{ll}
B(a) \leftarrow B(a).\\
A(a) \leftarrow (\neg C(a) \wedge B(a)).\\
R(a) \leftarrow not~ C(a), not~A(a).
\end{array}
$$
 Let $\Phi_P = \{A, B, R\}$, $\Omega = \{A, B\}$,  and $\Phi_C = \{a\}$. Hence  $\HB_\Pi = \{A(a), B(a),R(a)\}$.
 For $I_0 = \emptyset$, we have $T_\KB (I_0) = \emptyset$, $U_\KB (I_0) = \{B(a), A(a)\}$, and $Z_\KB = \emptyset$.  $B(a)$ is in $U_\KB (I_0)$
because $B(a)$ is not derivable by any rule based on $I_0$, and $L \cup S |_\Omega \not \models B(a)$ for all $S \in Lit_\Pi^c$ with $I_0 \cup \neg . U_\KB(I_0) \subseteq S$ (condition (b) in Definition \ref{u}).  Similarly,
$A(a)$ is in $U_\KB (I_0) $. Since $C(a)$ is not derivable under all consistent extensions, we derive $R(a)$.
Therefore,
$\lfp(W_\KB) = \{\neg B(a), \neg A(a), R(a)\}$.  Note that since $C(a) \not \in \HB_\Pi$ its truth value is not part of the well-founded semantics.
\comment{
[(ii)]  Now let us place $C$ under closed world reasoning by assuming
${\bf P} = \{A, B, C, R\}$, thus $\HB_\Pi = \{A(a),$ $B(a), C(a), R(a)\}$.
Note that here $\Omega = \{A, B, C\}$. Now $U_\KB(I_0) = \{B(a), A(a), C(a)\}$, and $\lfp(W_\KB) = \{\neg B(a), \neg A(a), \neg C(a), R(a)\}$.
}
\end{example}

\comment{
\begin{proof}
$W_\KB^0 = V_\KB^0 = \emptyset$. For computing $(V_\KB^{i+1})^+$, given $V_\KB^i$, we have
$$T_\KB^0 = (V_\KB^i)^+, \cdots, T_\KB^{k+1} = T_\KB(T_\KB^k\cup \neg.(V_\KB^i)^-),\cdots,\FP_{T_\KB}(V_\KB^i)$$
We first show $\lfp(V_\KB) \subseteq \lfp(W_\KB)$, assume that, for any $i\geq0$, and for any literal $A\in V_\KB^i$, there is an $i'\geq0$ such that $A\in W_\KB^{i'}$, and next we show that for any $A\in V_\KB^{i+1}$, there is a $j\geq i'$ such that $A\in W_\KB^j$. If $A\in V_\KB^i$, by the induction hypothesis we have $A\in W_\KB^{i'}$. Otherwise, $A\in T_\KB^k$, for some $k\geq0$. Then we will have $A\in W_\KB^{i'+k}$, since if any condition in Definition \ref{7}  is satisfied by the interval $[T_\KB^k, \HB_\Pi \backslash U_\KB(V_\KB^i)]$, it remains to be satisfied by $[(W_\KB^{i'+k})^+, \HB_\Pi \backslash W_\KB^{i'+k})^-]$, as the latter is a sub-interval of the former.
\end{proof}
}

\comment{
Actually, more flexibilities are possible.  For example,
one can remove condition (b) in Definition \ref{7} and carry out the computation under it  separately.
That is, we can define yet another alternative operator, say  $V'_\KB$, as
\begin{eqnarray}
\label{V2}
V'_{\KB}(I) = \FP_{T'_{\KB}}(I) \cup  \{A ~|~L\cup I|_\Omega \models A\}
\cup \neg.U_{\KB}(I)
\end{eqnarray}
where $T'_\KB$ is the same as $T_\KB$ except condition (b) is removed.  We can show

\begin{lemma}
$\lfp(V'_\KB) = \lfp(V_\KB) = \lfp(W_\KB)$.
\end{lemma}
}

\section{Properties and Relations}
\label{p}

We first show that the well-founded semantics is a generalization of the well-founded semantics for normal logic  programs.

\begin{theorem}
Let $\KB = (\emptyset, \Pi)$ be a normal logic program.  Then, the $\WFS$ of $\KB$ coincides with the $\WFS$ of $\Pi$.
\end{theorem}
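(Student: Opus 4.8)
The plan is to show that, for a normal logic program, every piece of the FOL machinery in Definition~\ref{7} degenerates to its classical counterpart, so that $W_\KB$ reduces to the familiar well-founded operator and the two least fixpoints coincide. The first observation I would record is that, by the definition of a normal logic program, $L=\emptyset$ and every atom occurring in $\KB$ is ordinary; hence no predicate of $\HB_\Pi$ lies in $\Omega$, i.e.\ $\Omega=\emptyset$. Consequently $I|_\Omega=\emptyset$ and $S|_\Omega=\emptyset$ for every $I\subseteq Lit_\Pi$ and every $S\in Lit_\Pi^c$, so all entailment side-conditions are evaluated against the empty theory.

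From this I would collapse the three basic operators one at a time. For $Z_\KB$, the condition $L\cup I|_\Omega\models\neg A$ becomes $\emptyset\models\neg A$, which never holds, so $Z_\KB(I)=\emptyset$ for all $I$. For $T_\KB$, since $L=\emptyset$ is consistent, the inconsistency branch of clause~(i) fires only when $I$ itself is inconsistent; clause~(i.b) becomes $\emptyset\models H$, which is false; and clauses~(i.a.2)--(i.a.3) are vacuously satisfied because $\Pi$ contains no FOL-formulas. What remains of clause~(i.a) is exactly that some rule $r$ with $head(r)=H$ has $pos(r)\subseteq I$ and $\neg.neg(r)\subseteq I$; that is, $T_\KB$ restricted to consistent $I$ is precisely the standard immediate-consequence operator $T_\Pi$ on partial interpretations.

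For $U_\KB$ I would verify that Definition~\ref{u} reduces to the van~Gelder--Ross--Schlipf notion of unfounded set. With $L=\emptyset$, condition~(b) reads $\emptyset\not\models H$, which always holds and is therefore inert, while (a.iii) and (a.iv) are vacuous; the surviving clauses (a.i) and (a.ii) say precisely that for each rule with head $H$ some positive body atom has its negation in $I\cup\neg.U$ or some negative body atom lies in $I$. Hence $U_\KB(I)$ is the standard greatest unfounded set $U_\Pi(I)$, and $W_\KB(I)=T_\KB(I)\cup\neg.U_\KB(I)$ equals the classical operator $W_\Pi(I)=T_\Pi(I)\cup\neg.U_\Pi(I)$ on every consistent $I$.

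Finally I would run the fixpoint argument. Iterating $W_\KB$ from $\emptyset$, I would show by induction along the ordinals that $W_\KB^{\alpha}=W_\Pi^{\alpha}$ and, crucially, that each iterate is consistent: since the standard well-founded model is three-valued and consistent, the degenerate inconsistency branches of Definitions~\ref{u} and~\ref{7} are never reached, so the reductions above apply at every stage. By monotonicity and Knaster--Tarski the two least fixpoints therefore agree, giving $\lfp(W_\KB)=\lfp(W_\Pi)$, i.e.\ the $\WFS$ of $\KB$ equals the $\WFS$ of $\Pi$. I expect the main obstacle to be exactly this last point---showing that the FOL apparatus is genuinely \emph{inert} rather than merely vacuous, i.e.\ certifying that consistency is preserved throughout the iteration so that the $\HB_\Pi$-collapse cases never activate, and matching the unfounded-set clauses precisely enough that the use of $I\cup\neg.U$ in (a.i) lines up with the classical definition.
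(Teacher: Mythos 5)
Your proposal is correct and follows essentially the same route as the paper: collapse $Z_\KB$, $T_\KB$, and $U_\KB$ to their classical counterparts (the paper notes $Z_\KB(I)\subseteq U_\KB(I)$ where you show $Z_\KB(I)=\emptyset$, which is a harmless strengthening) and conclude that the least fixpoints coincide. Your explicit inductive check that every iterate stays consistent, so the inconsistency branches of Definitions~\ref{u} and~\ref{7} never fire, is a detail the paper's terse proof leaves implicit, and it is worth keeping.
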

\begin{proof}
The $\WFS$ of a normal program $\Pi$ is defined by the least fixpoint of a monotone operator $W_\Pi$ on the set of consistent subsets of $\HB_\Pi \cup \neg . \HB_\Pi$:
\begin{itemize}
\item
$T_\Pi(S) = \{head(r)~|~ r \in \Pi, pos(r) \cup \neg . neg(r) \subseteq S\}$
\item
$W_\Pi(S) = T_\Pi(S) \cup \neg . U_\Pi(S)$
\end{itemize}
where $U_\Pi(S)$ is the greatest unfounded set of $\Pi$ w.r.t. $S$. A set $U \subseteq \HB_\Pi$ is an unfounded set of $\Pi$ w.r.t. $S$, if for every $a \in U$ and every rule $r \in \Pi$ with $head(r)= a$, either (i) $\neg b \in S \cup \neg . U$ for some $b \in pos(r)$, or (ii) $b \in S$ for some atom $b \in neg(r)$.

Then, it is immediate that the notion of unfounded set and the greatest unfounded set for normal logic program $\KB = (\emptyset, \Pi)$  coincide with those for $\Pi$, respectively.  Note that $Z_\KB (I) \subseteq U_\KB(I)$ when $L =\emptyset$. It is easy to see that the operator $T_\KB$ for normal program $\KB = (\emptyset, \Pi)$ reduces to $T_\Pi$ for normal program $\Pi$.
\end{proof}

The well-supported answer set semantics is defined for what are called {\em normal DL logic programs} \cite{ShenW11}, which applies to FOL-programs.
There  is however a subtle difference:  in the definition of the entailment relation, the $W_\KB$ operator uses 3-valued evaluation while the well-supported semantics is based on the notion of 2-valued {\em up to satisfaction}.

\begin{definition}
\label{up}
{\bf (Up to satisfaction)} Let $\KB = (L, \Pi)$, $l$ a literal, and $E$ and $I$ two interpretations with $E \subseteq I \subseteq \HB_\Pi$.
The relation $E$ {\em up to} $I$ {\em satisfies} $l$ {\em under} $L$, denoted $(E, I) \models_L l$,
is defined as:
$(E, I) \models_L l$ if $\forall F, E \subseteq F \subseteq I$, $F \models_L l$. The definition extends to conjunctions of literals.
\end{definition}

The entailment relation, $F \models_L l$, is based  2-valued satisfiability (cf. Def. \ref{2-valued}), i.e.,
$F\models_L l$ is
$L \cup F|_\Omega \cup \neg . {\bar F}|_\Omega\models l$, while in 3-valued satisfiability, $S \models_L l$ is  $L \cup S|_\Omega \models l$.

Given an FOL-program $\KB = (L,\Pi)$,
an immediate consequence operator is defined as:
 \begin{eqnarray}
\label{op}
  {\cal T}_\KB (E,I) = \{head(r) \mid r \in \Pi,
  (E, I) \models_L body(r)\}.
 \end{eqnarray}

The operator ${\cal T}_\KB$
 is monotonic on its first argument $E$ with $I$ fixed \cite{ShenW11}. Thus, for any model $I$ of $\KB$, we can compute a fixpoint, denoted ${\cal T}^\alpha_\KB (\emptyset, I)$.

\begin{definition}
\label{well-supported}
Let $\KB = (L,\Pi)$ be an FOL-program and $I$ a model of $\KB$. $I$ is an {\em answer set of $\KB$} if for every $A \in I$, either $A \in {\cal T}^\alpha_\KB (\emptyset, I)$ or $L \cup {\cal T}^\alpha_\KB (\emptyset, I)|_\Omega \cup \neg.\bar{I}|_\Omega \! \models \! A$.
\end{definition}

The next theorem shows that the well-founded semantics of an FOL-program approximates its well-supported answer set semantics.

\begin{theorem}
Let $\KB = (L,\Pi)$ be an FOL-program. Then every well-supported answer set of $\KB$ includes all atoms $H \in \HB_\Pi$ that are well-founded and no atoms $H \in \HB_\Pi$ that are unfounded or are in
$Z_\KB(\lfp(W_\KB))$.
\end{theorem}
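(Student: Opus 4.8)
The plan is to prove a single invariant by transfinite induction on the stages of $W_\KB$, namely that each iterate agrees with every well-supported answer set. Fix a well-supported answer set $I$ and write $M = {\cal T}^\alpha_\KB(\emptyset, I)$ for its supporting fixpoint; note $M \subseteq I$, since taking $F = I$ in the up-to satisfaction forces $I \models_L body(r)$ for every rule contributing to $M$, and $I$ is a model. Writing $W^\beta_\KB$ for the $\beta$-th iterate ($W^0_\KB = \emptyset$, successors by $W_\KB$, unions at limits), I would show for all $\beta$ both (P1) $(W^\beta_\KB)^+ \subseteq I$ and (P2) $(W^\beta_\KB)^- \cap I = \emptyset$. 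Passing to the fixpoint, (P1) says every well-founded atom lies in $I$, while (P2) says no unfounded atom and no atom of $Z_\KB(\lfp(W_\KB))$ lies in $I$, which is exactly the claim; moreover, if $\lfp(W_\KB)$ were inconsistent, (P1) and (P2) would place some atom and its complement simultaneously inside and outside $I$, so no well-supported answer set could exist and the theorem holds vacuously. The base and limit stages are immediate, so the work is the successor step: assuming (P1), (P2) for $J := W^\beta_\KB$, establish them for $W_\KB(J) = T_\KB(J) \cup \neg.U_\KB(J) \cup \neg.Z_\KB(J)$.

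The positive part $T_\KB(J) \subseteq I$ and the $Z_\KB$ part are the routine cases. For $H \in T_\KB(J)$ I would show $I \models_L head(r)$ for the justifying rule $r$: the induction hypothesis gives $J|_\Omega \subseteq I|_\Omega \cup \neg.\bar I|_\Omega$, so every positive ordinary literal of $body(r)$ is in $I$ by (P1), every positive FOL-formula is $L$-entailed from $I$ by monotonicity of classical entailment, every negative ordinary literal is absent from $I$ by (P2), and for each negative FOL-formula I would instantiate condition (i.a.3) at the consistent extension $S = I \cup \neg.\bar I$ of $J$ (consistent with $L$ because $I$ is a model) to get $I \not\models_L B$. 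Hence $I \models_L body(r)$, and since $I$ is a model $I \models_L H$; consistency of $I$ with $L$ then upgrades $I \models_L H$ to $H \in I$ even when $H$ carries a shared predicate. Case (i.b) and the $Z_\KB$ case are the same monotonicity-plus-consistency argument applied to $L \cup J|_\Omega \models H$ (resp. $\models \neg H$). This settles (P1) and the $Z_\KB$ half of (P2).

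The crux is $U_\KB(J) \cap I = \emptyset$, for which I would argue by contradiction using well-supportedness. The key lemma is that no atom of $U := U_\KB(J)$ can be supported: by an inner induction on the construction $M_0 = \emptyset$, $M_{k+1} = {\cal T}_\KB(M_k, I)$ I would show $M_k \cap U = \emptyset$. If some $A \in M_{k+1} \cap U$ came from a rule $r$ with $(M_k, I) \models_L body(r)$, then since $head(r) = A \in U$ one of the unfounded conditions (a.i)--(a.iv) holds for $r$, and each is to be contradicted by evaluating $body(r)$ at a suitable $F$ with $M_k \subseteq F \subseteq I$: the ordinary cases (a.i), (a.ii) use $F = I$ and $F = M_k$ together with (P1), (P2) and $M_k \cap U = \emptyset$; the FOL cases (a.iii), (a.iv) require producing a consistent-with-$L$ extension $S$ of $J \cup \neg.U$ of the form $F \cup \neg.\bar F$ that witnesses the entailment forced by body satisfaction, contradicting the non-entailment demanded by the unfounded condition. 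Having ruled out rule support, I would treat the remaining atoms of $U \cap I$, which are justified through the $L$-route $L \cup M|_\Omega \cup \neg.\bar I|_\Omega \models A$, by feeding the corresponding total interpretation into condition (b) of the unfounded-set definition.

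I expect the decisive obstacle to be exactly the FOL-formula subcases (a.iii), (a.iv) and condition (b): there the unfounded-set definition quantifies over all consistent extensions $S \supseteq J \cup \neg.U$, whereas up-to satisfaction quantifies over interpretations $F$ between $M_k$ and $I$, and the two quantifier ranges must be reconciled by exhibiting one specific $F$ whose literal form $F \cup \neg.\bar F$ is simultaneously a consistent extension of $J \cup \neg.U$ and compatible with $I$'s support. The natural candidate $F = I \setminus U$ forces the atoms of $U$ to false, which need not remain consistent with $L$, so the argument must first secure the disjointness $J^+ \cap U = \emptyset$ and the consistency with $L$ of the flipped interpretation; this bookkeeping, rather than any single entailment, is where the real care is needed.
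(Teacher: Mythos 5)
Your overall strategy is the same as the paper's: induct on the stages of $W_\KB$, compare them against the iteration ${\cal T}^0_\KB \subseteq {\cal T}^1_\KB \subseteq \cdots$ built from the answer set $I$, and translate the quantification over consistent extensions in Definition~\ref{u} and Definition~\ref{7} into the quantification over total interpretations $F$ with ${\cal T}^k_\KB \subseteq F \subseteq I$ used by up-to satisfaction. The organizational difference is that the paper carries the stronger, stage-matched invariant $W^i_\KB \subseteq E_i \cup {\cal T}^i_\KB \cup \neg.\bar{I}$ (with $E_i$ the atoms entailed from $({\cal T}^{i-1}_\KB, I)$), advancing the two sequences in lockstep, whereas you carry only the weaker pair (P1)/(P2) and compensate with an inner induction on the ${\cal T}$-stages for the unfounded case. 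Your treatment of $T_\KB$, $Z_\KB$, and the vacuous inconsistent case is fine, and the upgrade from $I \models_L H$ to $H \in I$ via consistency of $I$ with $L$ is correct.

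The one substantive issue is exactly the one you flag at the end, and you leave it unresolved. Two remarks on it. First, half of your worry is a red herring: $Lit^c_\Pi$ requires only literal-level consistency ($S^+ \cap S^- = \emptyset$), not satisfiability together with $L$, so $F \cup \neg.\bar{F}$ qualifies as a candidate extension for \emph{any} $F \subseteq \HB_\Pi$; no argument about consistency with $L$ of the flipped interpretation is needed. Second, the remaining half is genuine: to use $F = I \setminus U$ as a witness against conditions (a.iii) and (b) you must have $J^+ \cap U = \emptyset$ and $M_k \cap U = \emptyset$, and the former does not follow from (P1)/(P2) alone --- it is equivalent to consistency of the next stage $W_\KB(J)$, which is precisely part of what is being proved (Example~4 of the paper shows stages can become inconsistent when no well-supported answer set exists, so this cannot be waved away). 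Your weaker invariant gives you no handle on it for inner stages $j < k$; the paper's stage-matched invariant $(W^k_\KB)^+ \subseteq E_k \cup {\cal T}^k_\KB$ is what lets one fold $J^+ \cap U = \emptyset$ into the same induction that shows $(E_j \cup {\cal T}^j_\KB) \cap U = \emptyset$. To be fair, the paper's own proof also compresses this step into the phrase ``since every total interpretation is a partial one'' without exhibiting the witness extension explicitly, so you have located the real soft spot of the argument; but as written your proposal acknowledges the obstacle rather than closing it, and with only (P1)/(P2) as the induction hypothesis it cannot be closed. Strengthen the outer invariant to the paper's form (or add $(W^\beta_\KB)^+ \subseteq E_\beta \cup {\cal T}^\beta_\KB$ to (P1)) and the bookkeeping goes through.
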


\begin{proof}
To prove the assertion, it is sufficient to show that if $\lfp(W_\KB)$ is consistent, then for every well-supported answer set $I$,
all atoms in $\lfp(W_\KB)$ are in $I$ and all negated atoms in $\lfp(W_\KB)$
are in $\neg . {\bar I}$.

We consider the fixpoint construction by the operators ${\cal T}_\KB(\cdot, I)$ and  $W_\KB$. Let us use a short notation for the respective sequences by
\begin{eqnarray}
\label{9}
{\cal T}_\KB^0 = \emptyset, \ldots  , {\cal T}_\KB^{k+1} = {\cal T}_\KB ({\cal T}_\KB^k, I), \ldots ~~~~~\\
\label{10}
W_\KB^0 =\emptyset, \ldots , W_\KB^{k+1} = W(W_\KB^k), \ldots ~~~~~~
\end{eqnarray}
for all $k \geq 0$. Define $E_0 = \emptyset$ and
$E_i = \{H ~|~({\cal T}^{i-1}_\KB, I) \models_L H\}$ for all $i \geq 1$. We show that  $W^i_\KB \subseteq E_i \cup {\cal T}^{i}_\KB \cup \neg.\bar{I}$, for all $i \geq 0$. The base case is trivial. For the inductive step, assume for any $k \geq 0$ the subset relation  holds and we show it holds for $k+1$. The proof is conducted on two cases: {\bf (I)} Assume an atom $H \in W_\KB^{k+1}$ and show $H \in E_{k+1} \cup {\cal T}^{k+1}_\KB$, and {\bf (II)} Assume a negated atom $\neg H \in W_\KB^{k+1}$ and show $\neg H \in \neg.\bar{I}$.

By definition and monotonicity of the operator ${\cal T}_\KB$, $({\cal T}^{i}_\KB, I) \models_L E_i$, and it follows from the first-order entailment that for any atom $H \in \HB_\Pi$,
\begin{eqnarray}
\label{e1}
({\cal T}^{i}_\KB, I) \models_L H \mbox{ iff } (E_i\cup {\cal T}^{i}_\KB , I) \models_L H.
\end{eqnarray}
By definition,
$$
W_\KB^{k+1} = T_\KB(W_\KB^k) \cup \neg.U_\KB(W_\KB^k) \cup \neg.Z_\KB(W_\KB^k)
$$
By Proposition 1  of \cite{Shen11}, for any FOL-formula $H$,
\begin{eqnarray}
\label{p1}
(E, I) \models_L H \mbox{ iff } L \cup E |_\Omega \cup \neg.\bar{I}|_\Omega \models H.
\end{eqnarray}
By Proposition 2 \cite{Shen11}, for any ordinary atom $H$,
\begin{eqnarray}
\label{p2}
(E, I) \models_L H \mbox{ iff } H \in E; (E, I) \models_L not~H \mbox{ iff } H \not \in I.
\end{eqnarray}

{\bf (I)} For any atom $H \in W_\KB^{k+1}$, we have $H \in T_\KB(W_\KB^k)$. If condition (i.b) in Definition \ref{7} holds, we have $L \cup W_\KB^k |_\Omega \models H$. By induction hypothesis, $L \cup (E_k\cup {\cal T}^k_\KB) |_\Omega \cup \neg. \bar{I} |_\Omega \models H$. Then by (\ref{p1}) and (\ref{e1}), $({\cal T}^k_\KB, I) \models_L H$. Thus $H \in E_{k+1}$. If condition (i.a) in Definition \ref{7} holds, we consider the following four cases:
\begin{enumerate}
  \item For any ordinary atom $A \in pos(r)$, $A \in W_\KB^k$, thus $({\cal T}^k_\KB, I) \models_L A$ by (\ref{p2}) and (\ref{e1}).
  \item For any ordinary atom $A \in neg(r)$, $\neg A \in W_\KB^k$, thus $({\cal T}^k_\KB, I) \models_L not~A$.
  \item For any FOL-formula $A \in pos(r)$, $L \cup W_\KB^k|_\Omega \models A$, then $({\cal T}^k_\KB, I) \models_L A$.
  \item For any FOL-formula $A \in neg (r)$, $L \cup (W_\KB^k)'|_\Omega \not \models A$ for every $(W_\KB^k)'$ such that $W_\KB^k \subseteq (W_\KB^k)' \in Lit_\Pi^c$, we have $({\cal T}^k_\KB, I) \not \models_L A$,  since every total interpretation is a partial one.
\end{enumerate}
Hence $H \in {\cal T}^{k+1}_\KB$.

{\bf (II)} For any negated atom $\neg H \in W_\KB^{k+1}$, either $H \in U_\KB(W_\KB^k)$ or $H \in Z_\KB(W_\KB^k)$. For the case $H \in U_\KB(W_\KB^k)$, if condition (b) in Definition \ref{u} holds, we have  $({\cal T}^k_\KB, I) \not \models_L H$ by  (\ref{e1}), (\ref{p1}) and induction hypothesis, in addition to the fact that every total interpretation is a partial one. Then $H \not \in E_{k+1}$. For condition (a) in Definition \ref{u}, we also consider the following four situations:
\begin{enumerate}
  \item For any ordinary atom $A \in pos(r)$, $\neg A \in W_\KB^k \cup \neg.U$, thus $({\cal T}^k_\KB, I) \not \models_L A$, by a similar argument above.
  \item For any ordinary atom $A \in neg(r)$, $A \in W_\KB^k$, thus $({\cal T}^k_\KB, I) \models_L A$.
  \item For any FOL-formula $A \in pos(r)$, $L \cup (W_\KB^k)'|_\Omega \not \models A$, for every $W_\KB^k \subseteq (W_\KB^k)' \in Lit_\Pi^c$, then $({\cal T}^k_\KB, I) \not \models_L A$, since every total interpretation is a partial one.
  \item For any FOL-formula $A \in neg (r)$, $L \cup W_\KB^k|_\Omega \models A$, we have $({\cal T}^k_\KB, I) \models_L A$.
\end{enumerate}
We have $H \not \in {\cal T}^{k+1}_\KB$. Hence $H \not \in E_{k+1} \cup {\cal T}^{k+1}_\KB$

For any $\neg H \in W_\KB^k$, $\neg H \in \neg.\bar{I}$, since the operator $E$ and ${\cal T}_\KB$ only generate positive atoms. We then have $H \not \in E_k \cup {\cal T}^k_\KB$. As $k$ is arbitrary, we have $H \in U_\KB(\lfp(W_\KB))$ and $H \not \in E_\alpha \cup {\cal T}^\alpha_\KB$, where $E_\alpha$ and ${\cal T}^\alpha_\KB$ are the respective fixpoints. Since $E_\alpha \cup {\cal T}^\alpha_\KB = I$ (By definition \ref{well-supported}), we get $H \in \bar{I}$.
Similarly, if $H \in Z_\KB(W_\KB^k)$, then $H \in \bar{I}$. We thus have proved that $W^{k+1}_\KB \subseteq E_{k+1} \cup {\cal T}^{k+1}_\KB \cup \neg.\bar{I}$.
\end{proof}

\section{Related Work}
\label{r}

The most relevant work in defining well-founded semantics for
combing rules with DLs are \cite{Eiter:TOCL2011,lukas-IEEE}. The former embeds {\em dl-atoms} in rule bodies to serve as queries to the underlying ontology, and it does not allow the predicate in a rule head to be shared in the ontology.
In both approaches, syntactic restrictions are posted so that
the least fixpoint is always constructed over sets of consistent literals.
It is also a unique feature in our approach that combined reasoning with closed world and open world is supported.

A program in FO(ID) has a clear knowledge representation ``task" \-- the rule component is used to define concepts, whereas the FO component may assert additional properties of the defined concepts.  All formulas in FO(ID) are interpreted under closed world assumption. Thus,
FOL-programs and FO(ID) have fundamental differences in basic ideas.
On semantics, FOL-formulas can be interpreted under open world and closed world flexibly.  On modeling, the rule set in FO(ID) is built on ontologies, thus information can only flow from a  first order theory to rules. But in FOL-programs, the first order theory and rules are tightly integrated, and thus information can flow from each other bilaterally.

\section{Conclusion and Future Directions}
In this paper we have defined a new well-founded semantics for FOL-programs, where arbitrary FOL-formulas are allowed to appear in rule bodies and an atom with its predicate shared with first-order theory to appear in a rule head. Combined reasoning with closed world as well as open world is supported. Moreover, inconsistencies are dealt with explicitly, and thus the task of computing answer sets can be prejudged in case that the well-founded semantics is an inconsistent set.
We have shown that the well-founded semantics is an appropriate approximation of the well-supported answer set semantics defined in \cite{ShenW11}.

As future work, we will study the approximation fixpoint theory (AFT) \cite{denecker2000approximations,DeneckerMT04}, and investigate whether and how well-founded and stable semantics of FOL-programs can be defined uniformly under an extended approximation fixpoint theory. We are also interested in possible different approximating operators for  alternative semantics of FOL-programs. In \cite{DeneckerMT04} the authors show that the theory of consistent approximations can be applied to the entire bilattice ${\cal L}^2$ (including inconsistent elements), under the assumption that an approximating operator ${\cal A}$ is {\em symmetric}. This symmetry assumption guarantees that no transition from a consistent state to an inconsistent one may take place.
As argued at the outset of this paper, this is precisely what we cannot assume for a definition of
well-founded semantics for all FOL-programs.

\section*{Acknowledgements}
This work is supported by the National Natural Science Foundation of China (NSFC) grants 61373035 and 61373165, and by National High-tech R$\&$D Program of China (863 Program) grant 2013AA013204.

\label{f}
\comment{
We accomplish three things in this paper.  First, we define a new well-founded semantics
for FOL-programs, which also serves
as an example to demonstrate the need to deal with inconsistencies explicitly. We then extend the theory of consistent approximations of
\cite{DeneckerMT04}
to treat inconsistencies. And finally, we apply the extended theory to FOL-programs
to define well-founded and stable semantics, which are related to semantics defined using different methods.

The extended AFT developed in this paper supports two features, one for transitions from a consistent state to an inconsistent one, and the other for transitions from an inconsistent state to a possibly deeper inconsistent one. In this paper we utilize the first  to define the semantics of FOL-programs.  For potential applications of the second feature, one may consider inference systems based on non-classical logic (e.g., paraconsistent logic) or explore  the idea of belief revision. In both cases we want to infer  nontrivial conclusions  in case of an inconsistent theory. 
The $\Phi_\KB$ operator defined in this paper leaves it quite open as what may be done
in case of inconsistency, which allows further development.

In fact, inconsistencies frequently occur in Boolean networks, which have been recently related to logic programming \cite{Inoue11}. Also, there is a potential to characterize the semantics of
ASP languages that support consistency restoring rules \cite{BalaiGZ13} by the extended AFT.
In this paper, while our focus has been on the mathematical construction and reconstruction of semantics, the complexity of these semantics requires further study. These are interesting future directions.
}


\comment{
{\bf Some detailed remarks which might be useful later.}

Eiter et al. \cite{Eiter:TOCL2011} proposed a well-founded semantics for dl-programs in form of $\KB = (L,\Pi)$, where $L$ is a DL knowledge base and $\Pi$ is a collection of rules. A distinct feature is that the body of a rule may contain {\em DL-atoms}, which are well-designed interfaces to access and enhance DL knowledge base $L$.  Dl-programs differ in significant ways from normal DL logic programs. First, in a dl-program, $\Pi$ and $L$ do not share predicate symbols. Thus, information flow between $\Pi$ to $L$ relies on DL-atoms only. For normal DL logic programs, since $\Pi$ and $L$ share some predicate symbols, DL expressions are allowed to occur in $\Pi$ directly without predicate symbols mapping. Hence information flow between rule bases and ontologies bidirectionally and flexibly. Second, in a dl-program, dl-atoms are not allowed to appear in rule heads, hence we cannot derive any new knowledge about $L$ from $\Pi$. However, there is a cost for this extra feature. The formalization of the unfounded and well-founded conditions turns out to be much more complicated in our case, and there is a related computational cost.

Disjunctive dl-programs \cite{lukas-IEEE} under well-founded semantics are the closest to normal DL logic programs, if we restrict the heads of rules to be non-disjunctive. Let us call them normal dl-programs, which
differ at least in four ways from normal DL programs. Let $\KB = (L,\Pi)$ be a normal dl-program, where $\Pi$ is built over a vocabulary $\Phi = ({\bf P}, {\bf C})$, {\bf P} is a finite set of predicate symbols, and {\bf C} is a nonempty finite set of constants. (1) The approach in \cite{lukas-IEEE} only allows atomic DL expressions to occur in rule bodies and heads.  But
the normal DL logic programs allow arbitrary DL expressions occur in rule bodies.
(2) All concepts and roles occurring in $\Pi$ are required to be included in {\bf P}, so that all of them are interpreted over the Herbrand base $\HB_\Pi$ of $\Pi$. This restriction seems to prevent some atomic DL expressions in $\Pi$ to be subject to open world reasoning.
(3) In normal dl-programs, a underlying assumption is that the DL knowledge base $L$ can be cleanly separated into $L^+$ and $L^-$, the former is a set of {\em tuple generating dependencies} and the latter consists of {\em non-conflicting keys} and {\em negative constraints}. As a result, the DL knowledge base must be restricted to specific DLs such as the $DL\text{-}Lite$ family of DLs. In normal DL logic programs, arbitrary DLs are allowed. (4) The semantics of dl-programs is defined as FLP-answer sets. In general, FLP-answer sets may not be well-supported when dealing with complex formulas, such as DL expressions other than atomic ones.

\begin{example}{\em (\cite{ShenW11})
\label{cycle}
Let $L = \emptyset$ and
$\Pi = \{A(g).~~B(g) \leftarrow C(g).~~C(g) \leftarrow ((A \sqcap \neg C) \sqcup B)(g).\}$.
Let ${\bf P} = \{A, B, C\}$, ${\bf C} = \{g\}$ and $\Omega =\{A, B, C\}$. We have $\HB_\Pi = \{A(g), B(g), C(g)\}$.
$\Pi$ has only one model consistent with $L$, $I = \{A(g), B(g), C(g)\}$. This model is not a well-supported answer set of $\KB$.
We can use a fresh DL concept $D$ to replace the DL expression $(A \sqcap \neg C) \sqcup B$ and add to $L$ an axiom $D \equiv (A \sqcap \neg C) \sqcup B$. This yields in syntax a normal dl-program of \cite{lukas-IEEE}, which is also a normal DL program in syntax.
$$
\begin{array}{ll}
L':D \equiv (A \sqcap \neg C) \sqcup B\\
\Pi': A(g).~~~B(g) \leftarrow C(g).~~~C(g) \leftarrow D(g).
\end{array}
$$
Denote this program by $\KB'$.
By adding $D$ into {\bf P} and $\Omega$ and by assuming the same {\bf C} as above, $\KB'$ has no well-supported answer set.  The semantics of disjunctive dl-programs is based on FLP-reduct,  whereby answer sets are minimal models which are not necessarily well-supported.
For $\KB'$ above, $I$ is an FLP-answer set. Observe that the evidence of the truth of $B(g), C(g), D(g)$ in the answer set can only be inferred via a self-supporting loop $B(g)\Leftarrow C(g)\Leftarrow D(g)\Leftarrow B(g)$.  Therefore, FLP-answer sets
are not necessarily the exact fixpoints of the operator $\Phi_{\KB'}$.
}
\end{example}
}

\bibliographystyle{acmtrans}
\bibliography{journal09}

\comment{
\newpage
\appendix
\section{Proofs}
\noindent
{\bf Proof of Lemma \ref{3op}:}
\begin{proof}
 We can show the following:
$$u \leq A^1(u,v) \leq A^1(u,u) = A^2(u,u) \leq A^2(u,x).$$
The first inequation is because $(u,v) \leq_p {\cal A}(u,v)$ since $(u,v)$ is ${\cal A}$-reliable. The second is due to
${\cal A}(u,v) \leq_p {\cal A}(u,u)$ because $(u,v)$ is consistent thus
$(u,v) \leq_p (u,u)$ and ${\cal A}$ is $\leq_p$-monotone.
The next equation is by the fact that ${\cal A}(u,u) = ({\cal O}(u), {\cal O}(u))$ when ${\cal A}(u,u)$ is consistent. The last inequation is due to $x \geq u$ and that ${\cal A}$ is $\leq_p$-monotone.
\end{proof}

\noindent
{\bf Notation:}
In the proofs of Lemmas \ref{AS} and \ref{rp} below,
we denote by $St_{\cal A}^i(u,v)$ ($i =1,2$) the $i$th projection of the pair $St_{\cal A} (u,v)$. We may drop ${\cal A}$ and, when no confusion arises we may
just write $St^i$ for
$St^i(u,v)$ $(i=1,2)$; this is particularly convenient for referring to individual components of $St(u,v)$.

\vspace{.1in}
\noindent
{\bf Proof of Lemma \ref{AS}:}
\begin{proof}
If either $(u,v)$ or ${\cal A}(u,u)$ is inconsistent, by definition, $St (u,v) = (\lfp({\cal A}^1(\cdot, v)), {\cal A}^2(u,v))$.
By the ${\cal A}$-prudence  of $(u,v)$, $u \leq \lfp({\cal A}^1(\cdot, v))$ and by its ${\cal A}$-reliability, $(u,v) \leq_p {\cal A}(u,v)$ hence
${\cal A}^2(u,v) \leq v$.  It follows $(u,v) \leq_p St(u,v)$.
Otherwise, since $(u,v)$ is ${\cal A}$-reliable, we have ${\cal A}^2(u,v)  \leq v$, thus
$v$ is a pre-fixpoint of ${\cal A}^2(u, \cdot)$.
Since $St^2 (u,v)$ is the least pre-fixpoint of ${\cal A}^2(u, \cdot)$, $St^2 (u,v) \leq v$. As $(u,v)$ is ${\cal A}$-prudent, it follows $u \leq St^1 (u,v)$. Putting the two together, we get $(u,v) \leq_p St (u,v)$.
\end{proof}

\vspace{.1in}
\noindent
{\bf Proof of Lemma \ref{rp}:}
\begin{proof}
We prove the assertion by considering two cases: (a) $(u,v)$ or ${\cal A}(u, u)$ is inconsistent and (b) both are consistent.

(a) By definition, we have $St (u,v)  =  (\lfp({\cal A}^1(\cdot, v)), {\cal A}^2(u, v))$.  For ${\cal A}$-reliability, we need to show
$St(u,v) \leq_p {\cal A}(St(u,v))$, which can expressed equivalently as
$(St^1, St^2) \leq_p {\cal A}(St^1, St^2)$.  By  Lemma \ref{AS}, we have  $(u,v) \leq_p (St^1,St^2)$ thus
$St^2 \leq v$.
Since ${\cal A}^1(St^1, \cdot)$ is anti-monotone and ${\cal A}$ is $\leq_p$-monotone, it follows
$$
St^1 = {\cal A}^1(St^1, v) \leq {\cal A}^1(St^1, St^2)
$$
and similarly,
$$
{\cal A}^2(St^1, St^2) \leq {\cal A}^2(u,v) = St^2
$$
Thus $(St^1, St^2)$ is ${\cal A}$-reliable.
In this case, if $(St^1, St^2)$ or ${\cal A}(St^1, St^1)$ is inconsistent, we have
$
St(St^1, St^2) = (\lfp({\cal A}^1(\cdot, St^2)), {\cal A}^2(St^1, St^2))
$.
To show ${\cal A}$-prudence, we prove the inequality below by induction.
$$
St^1 = \lfp({\cal A}^1(\cdot, v)) \leq \lfp({\cal A}^1(\cdot, St^2))
$$
Let us denote the terms in these sequences by
$x^\alpha_{1}$ and $x^\alpha_{2}$ respectively, where $\alpha$ is an ordinal.
The zero case is obvious. For any successor ordinal $\alpha + 1$, we have
$$
x_1^{\alpha+1} = {\cal A}^1(x_1^\alpha, v) \leq {\cal A}^1(x_2^\alpha, St^2) = x_2^{\alpha+1}
$$
where the inequality is by induction hypothesis. The case for limit ordinal is similar.
This proves that $St^1 \leq St^1(St^1, St^2)$ thus
$(St^1, St^2)$ is ${\cal A}$-prudent.
Otherwise, both $(St^1, St^2)$ and ${\cal A}(St^1, St^1)$ are consistent, in which case from Definition \ref{st} we get
$$
St(St^1, St^2) = (\lfp({\cal A}^1(\cdot, St^2)), \lfp({\cal A}^2(St^1, \cdot)))
$$
We can prove ${\cal A}$-prudence in the  same way as above.

(b) If $(u,v)$ and ${\cal A}(u, u)$ are both consistent, then we have
$
St^1 = {\cal A}^1(St^1,v) \leq {\cal A}^1(St^1,St^2)$ and
${\cal A}^2(St^1,St^2) \leq {\cal A}^2(u,St^2) = St^2$.
Thus the pair $(St^1, St^2)$ is ${\cal A}$-reliable. The proof of ${\cal A}$-prudence in this case is the same as that for (a).
\end{proof}

\noindent
{\bf Proof of Lemma \ref{mono1}:}
\begin{proof}
Let us show the more subtle claim on $\Phi_\KB$ extending ${\cal K}_\KB$,
 namely,
$\Phi_\KB(I, I) = ({\cal K}_\KB (I), {\cal K}_\KB (I))$, when $\Phi_\KB(I, I)$ is consistent. Note that if $(I,I)$ is inconsistent with $L$, by definition, $\Phi_\KB(I, I)$ is inconsistent.  Thus, we only need to consider the case that $(I,I)$ is consistent with $L$. Assume $\Phi_\KB(I, I)$ is consistent and show
${\cal K}_\KB (I) = \Phi^1_\KB(I, I) = \Phi^2_\KB(I, I)$.
Since $(I,I)$ is consistent with $L$, and $(I,I)$
is a 2-valued interpretation, it is easy to see that ${\cal K}_\KB (I) = \Phi^1_\KB(I, I)$, as the definition of $\Phi^1_\KB$ reduces to that of ${\cal K}_\KB$.  This is also the case for $\Phi_\KB^2$ except the extra condition,
$(I,I) \not \models_L \neg H$, in the definition.
Now, suppose this extra condition made a difference, i.e.,  some $H \in \Phi_\KB^1(I,I)$, but due to $(I,I)  \models_L \neg H$,
$H \not \in \Phi_\KB^2(I,I)$.  Then,
$\Phi_\KB(I, I) = (\Phi^1_\KB(I, I), \Phi^2_\KB(I, I))$ is an inconsistent pair, which leads to a contradiction. Thus, ${\cal K}_\KB (I) = \Phi^1_\KB(I, I) = \Phi^2_\KB(I, I)$.
\end{proof}

\noindent
{\bf Proof of Theorem \ref{theorem1}:}
\begin{proof}
Note that $\Phi_\KB^1$ of Definition \ref{phi} is the same function as $T_\KB$ of Definition \ref{7} by setting  ${\cal F}_\KB(I,J) = (\HB_\Pi, \emptyset)$ whenever $(I,J)$ is inconsistent with $L$.

Consider the sequences that construct the well-founded fixpoint of $\Phi_\KB$ (namely the least fixpoint of the stable revision operator $St_{\Phi_\KB}$) and  $\lfp(V_\KB)$, respectively:
for  all
$ i \geq 0$,
\begin{eqnarray}
\xi^0 = (\emptyset, \HB_\Pi), \ldots ,  \xi^{i+1} = St ( \xi^i),  \ldots\\
V^0 =\emptyset, \ldots , V^{i+1} = V(V^i), \ldots
\label{s}
\end{eqnarray}
where, for ease of presentation, we dropped the subscript $\Phi_\KB$ in the case of the stable revision operator $St_{\Phi_\KB}$ and the subscript $\KB$ in the case of operator $V_\KB$ (cf. Lemma \ref{V}).

Define the projection:
if $\xi^i = (x,y)$, then $(\xi^i)_1 = x$ and $(\xi^i)_2 = y$.  Let us simply write $\xi^i_1$ and $\xi^i_2$ instead.

We prove the theorem by induction on the length of the sequences, namely, $\xi^i = ((V^i)^+, PT^i)$, where $PT^i = \HB_\Pi \backslash( V^i)^-$, for all $i \geq 0$. (Intuitively, $PT^i$ is a set of potentially true atoms.)

{\em Base case:} $V^0  = \emptyset$ and $\xi^0 = (\emptyset, \HB_\Pi)$, and it is easy to check, using the definition of the $\Phi_\KB$ operator, that $\xi^0$ is $\Phi_\KB$-reliable and $\Phi_\KB$-prudent.

{\em Inductive step:}  Assume $\xi^k = ((V^k)^+, PT^k)$ and $\xi^k$ is $\Phi_\KB$-reliable and $\Phi_\KB$-prudent, for any $k \geq 0$ and show it holds for $k+1$. By Lemma \ref{mono1}, we know that $\Phi_\KB$ is an approximating operator of ${\cal K}_\KB$.  By induction hypothesis, we know that $\xi^k$ is $\Phi_\KB$-reliable and $\Phi_\KB$-prudent, and by
Lemma \ref{rp}, $\xi^{k+1} = St(\xi^k_1,\xi^k_2)$ is also $\Phi_\KB$-reliable and $\Phi_\KB$-prudent.

To prove $\xi^{k+1} = ((V^{k+1})^+, PT^{k+1})$, we consider two cases, $\xi^k$ is inconsistent with $L$ and otherwise. 

{\bf Case (i)}:  $\xi^k$ is inconsistent with $L$. In this case, $\xi^k$ is inconsistent, or not.  In the latter case,  the 2-valued interpretation $\xi^k_1$ is also inconsistent with $L$, since
$(\HB_\Pi  - \xi^k_2) \subseteq  (\HB_\Pi -  \xi^k_1)$. It follows from the definition of the operator $\Phi_\KB$ that $\Phi_\KB(\xi^k_1,\xi^k_1) = (\HB_\Pi, \emptyset)$, which  is inconsistent.  Thus, either $\xi^k$ is inconsistent or  $\Phi_\KB(\xi^k_1,\xi^k_1)$ is inconsistent.
It follows from
Definition \ref{st} that we have
\begin{eqnarray}
\xi^{k+1} = ( \lfp(\Phi_\KB^1(\cdot, \xi^k_2)), \Phi_\KB^2 (\xi^k_1,\xi^k_2)).\end{eqnarray}
Since $\Phi_\KB^1$ is the same function as $T_\KB$,
we get (cf. Lemma (\ref{V}))
\begin{eqnarray}
\xi^{k+1} = (\lfp(T_{\KB, (V^k)^-}), \Phi_\KB^2 (\xi^k_1,\xi^k_2))\end{eqnarray}
and because $\Phi_\KB^2 (\xi^k_1,\xi^k_2) =\emptyset$, we have
\begin{eqnarray}\xi^{k+1} = (\lfp(T_{\KB, (V^k)^-}), \emptyset)
\end{eqnarray}
and by definition, we get
\begin{eqnarray}\xi^{k+1} = ((V^{k+1})^+, PT^{k+1})
\end{eqnarray}

{\bf Case (ii)}: $\xi^k$ is consistent with $L$.
Since $\xi^k$ is $\Phi_\KB$-reliable,
according to the definition of stable revision operator in Definition \ref{st}, we only need to consider two subcases:  $\Phi_\KB(\xi^k_1, \xi^k_1)$ is inconsistent, and otherwise.

\vspace{.1in}
{\bf Subcase (a)}:  $\Phi_\KB(\xi^k_1, \xi^k_1)$ is inconsistent.
From Definition \ref{st}, we have
\begin{eqnarray}
\xi^{k+1} = ( \lfp(\Phi_\KB^1(\cdot, \xi^k_2)), \Phi_\KB^2 (\xi^k_1,\xi^k_2)).\end{eqnarray}
Since $\Phi_\KB^1$ is the same function as $T_\KB$, by induction hypothesis,
it is sufficient to show
\begin{eqnarray}
\HB_\Pi \backslash \Phi_\KB^2((V^k)^+, PT^k) =
U_{\KB}(V^k) \cup  Z_{\KB}(V^k).
\end{eqnarray}
By Definition \ref{phi}, for any atom $H \in \HB_\Pi$, $H \not\in \Phi_\KB^2((V^k)^+, PT^k)$ iff $((V^k)^+, PT^k) \models_L \neg H$, or both of the following hold
\begin{itemize}
\item
[(i)] $\forall I', J'$ with $(V^k)^+ \subseteq I' \subseteq J' \subseteq PT^k$ such that  $(I'J') \not \models_L H$ and,
\item
[(ii)] $\forall I', J'$ with $(V^k)^+ \subseteq I' \subseteq J' \subseteq PT^k$ such that $(I'J') \not \models_L body(r)$, for all $r \in \Pi$ with $head(r) = H$.
\end{itemize}
Then, by Definition \ref{u} and the notion of of direct negative consequences in Definition \ref{7} (the operator $Z_\KB$), it is easy to see that $H \not\in \Phi_\KB^2((V^k)^+, PT^k)$ iff $H \in U_{\KB}(V^k)$ or $H \in Z_{\KB}(V^k)$.

\vspace{.1in}
{\bf Subcase (b)}: $\Phi_\KB(\xi^k_1, \xi^k_1)$ is consistent.
By definition, we have
\begin{eqnarray}
\xi^{k+1} = ( \lfp(\Phi_\KB^1(\cdot, \xi^k_2)), \lfp(\Phi_\KB^2 (\xi^k_1, \cdot)) ).\end{eqnarray}
Again, since $\Phi_\KB^1$ is the same function as $T_\KB$, by induction hypothesis, it is sufficient to show
\begin{eqnarray}
\label{eq}
\HB_\Pi \backslash \lfp(\Phi_\KB^2 ((V^{k})^+, \cdot)) =
U_{\KB}(V^k) \cup  Z_{\KB}(V^k).
\end{eqnarray}

Note that $Z_{\KB}(V^k) = \HB_\Pi \backslash R^k$,
where
$R^k = \{H \in \HB_\Pi~|~
((V^k)^+, PT^k) \not \models_L \neg H)\}$, as defined in $\Phi_\KB^2$ of Definition \ref{phi}.

 Notice also that, according to
Definition \ref{phi},  for any $(I,J)$, $H \in \Phi_\KB^2(I,J)$ iff $(I,J) \not \models_L \neg H$ and either condition (a) or condition (b) holds. Similar to Lemma \ref{V}, it is easy to show that
$$\lfp(\Phi_\KB^2 ((V^k)^+, \cdot)) = \lfp(\Phi_\KB'^2 ((V^k)^+, \cdot)) \cap R^k$$ where $\Phi_\KB'^2$ is the same as $\Phi_\KB^2$ except that we do not check the condition $((V^k)^+, PT^k) \not \models_L \neg H$ inside the fixpoint construction.  Since $Z_\KB(V^k) = \HB_\Pi\backslash R^k$,
it follows  that to prove
equation (\ref{eq}), it suffices to show that
for all
$H \in \HB_\Pi$, $H \not \in \lfp(\Phi_\KB'^2 ((V^k)^+, \cdot))$ iff
$H \in U_{\KB}(V^k)$.\footnote{Intuitively, this says that the atom set that is complement to
the set of atoms that are potentially true is precisely the greatest unfounded set.}

Below, let us denote by $\{y_n\}$ the sequence of constructing $\lfp(\Phi_\KB'^2 ((V^k)^+, \cdot))$, and by  $y_\infty$ its least fixpoint.

($\Leftarrow$) By contraposition. Assume $H \in y_\infty$ and show $H \not \in U_{\KB}(V^k)$.
By definition, $H \in y_\infty$ iff $H \in y_i$ for some term in $\{y_n\}$  iff (i) $H \in (V^k)^+$ or
(ii)
$(I', J') \models_L H$ for some $(I',J')$ s.t. $(V^k)^+ \subseteq I'\subseteq J' \subseteq  y_i$ or (iii)
$(I',J') \models_L body(r)$ for some rule
$r \in \Pi$ with $head(r) = H$ and $(I',J')$ s.t. $ (V^k)^+ \subseteq I'\subseteq J' \subseteq  y_i$, .  By induction on the sequence $\{y_n\}$, it is easy to show that
$H \not \in  U_{\KB}(V^k)$, since the statement in (i-iii) above violates either condition (a) or condition (b) in
Definition \ref{u}.

($\Rightarrow$) Assume $H \not \in y_\infty$ and show $H \in U_{\KB}(V^k)$.
$H$ either appears as the head of a rule, or not.  In the latter case, $H \in U_{\KB}(V^0)$ and hence $H \in U_{\KB}(V^k)$. In the former case, by definition,
that $H \not \in y_\infty$ implies that for every rule $r \in \Pi$ with $head(r) = H$, and
$\forall I',J'$  s.t. $ (V^k)^+ \subseteq I' \subseteq J' \subseteq y_\infty, (I',J') \not \models_L body(r)$. It follows from definition that $\HB_\Pi \backslash y_\infty$ is an unfounded set relative to $V^k$. Since $H \in \HB_\Pi \backslash y_\infty$, it is in the greatest unfounded set, $U_{\KB}(V^k)$,
 relative to $V^k$.
\end{proof}

\noindent
{\bf Proof of Lemma \ref{2-3}:}
\begin{proof}
If a literal $A$ in $C$ is positive,
we show that
$$
\begin{array}{ll}
\mbox{ (i) $(I,J) \models_L A$ iff (ii) $\langle I_1,I_2 \rangle \models_L A$ iff (iii) $\ll  \!\! I_1,I_2 \!\!\gg \models_L A$.}
\end{array}
$$
That (i) implies (ii) is by definition. That (ii)  implies (iii) is because every total interpretation is a partial one.  From (iii), for every 2-valued interpretation $F$, $I_1 \subseteq F \subseteq I_2$, we have $ F \models_L A$. It follows
$(I_1,I_1) \models_L A$ and $(I_2,I_2) \models_L A$, and since
$I_1 \subseteq I_2$, we have  $I_1 \cup \neg . \bar{I_2} \cup \neg . (I_2-I_1) \models_L A$ and $I_1 \cup \neg . \bar{I_2} \cup (I_2-I_1) \models_L A$ (i.e., it does not matter whether $(I_2-I_1)$ serves as a positive premise or a  negative premise), which implies $(I_1,I_2) \models_L A$. These equivalences imply
$\langle I_1,I_2 \rangle \not \models_L A$ iff $\ll  \!\! I_1,I_2 \!\!\gg \not \models_L A$, for any negative literal $not~A$ in $C$, where the former equals $\langle I_1,I_2 \rangle \models_L not~A$ and the latter
$\ll  \!\! I_1,I_2 \!\!\gg  \models_L not~A$.
Thus the claim holds.
\end{proof}

\vspace{.1in}
\noindent
{\bf Proof of Theorem \ref{theorem2}:}
\begin{proof}
Consider the fixpoint construction by the operators ${\cal T}_\KB(\cdot, I)$ and  ${\Phi^1}_\KB(\cdot, I)$. Let us use a short notation for the respective sequences as
\begin{eqnarray}
\label{9}
{\cal T}_\KB^0 ( = \emptyset), \ldots  , {\cal T}_\KB^{k+1} (= {\cal T}_\KB ({\cal T}_\KB^k, I)), \ldots ~~~~~\\
\label{10}
{\Phi^1}_\KB^0 (=\emptyset), \ldots , {\Phi^1}_\KB^{k+1}  (= \Phi^1({\Phi^1}_\KB^k, I)), \ldots,
\end{eqnarray}
for all $k \geq 0$. Define $E_0 = \emptyset$ and
$E_i = \{H ~|~({\cal T}^{i-1}_\KB, I) \models_L H\}$ for all $i \geq 1$. We show ${\Phi^1}^i_\KB = E_i \cup {\cal T}^{i}_\KB$, for all $i \geq 0$. The base case is trivial. For the inductive step, assume for any $k \geq 0$ the equation holds and we show it holds for $k+1$. By definition and monotonicity of the operator ${\cal T}_\KB$, $({\cal T}^{i}_\KB, I) \models_L E_i$, and it follows from the first-order entailment that
 for any atom $H \in \HB_\Pi$,
\begin{eqnarray}
\label{e1}
({\cal T}^{i}_\KB, I) \models_L H \mbox{ iff } (E_i\cup {\cal T}^{i}_\KB , I) \models_L H,
\end{eqnarray}
and similarly
\begin{eqnarray}
\label{e2}
\langle {\cal T}^{i}_\KB, I \rangle \models_L H \mbox{ iff }
\langle E_i \cup {\cal T}^{i}_\KB, I\rangle \models_L H.
\end{eqnarray}
We have
$$
\begin{array}{ll}
{\Phi^1}^{k+1}_\KB = \{H \in \HB_\Pi | ({\Phi^1}^k_\KB, I) \models_L H\} \cup
\{ head(r)~|~ r \in \Pi, \langle {\Phi^1}^k_\KB,I\rangle \models_L body(r)\}\\
~~~~~~~~~~~~~~~~~~~~~~~~~~~~~~~~~~~~~~~~~~~~~~~~~~~~~~~~~~~~~~~~~~~~~~~~~~~~~~~~~~~~~~~~~~~~~~~~~~~~~~~~~~~~~~~~~~~~~~~~~~~~~~(\mbox{by Definition \ref{phi}}) \\
=\{H \in \HB_\Pi | (E_k \cup {\cal T}^k_\KB, I) \models_L H\} \cup
\{head(r)~|~ r \in \Pi, \langle E_k\cup {\cal T}^k_\KB,I\rangle \models_L body(r)\}\\
~~~~~~~~~~~~~~~~~~~~~~~~~~~~~~~~~~~~~~~~~~~~~~~~~~~~~~~~~~~~~~~~~~~~~~~~~~~~~~~~~~~~~~~~~~~~~~~~~~~~~~~~~~~~~~~~~(\mbox{by  induction hypothesis})\\
= \{H \in \HB_\Pi | ({\cal T}^k_\KB, I) \models_L H\} \cup  \{head(r)~|~ r \in \Pi, \langle {\cal T}^k_\KB, I\rangle \models_L body(r)\}\\
~~~~~~~~~~~~~~~~~~~~~~~~~~~~~~~~~~~~~~~~~~~~~~~~~~~~~~~~~~~~~~~~~~~~~~~~~~~~~~~~~~~~~~~~~~~~~~~~~~~~~~~~~~~~~~~~~~~~~~~(\mbox{by (\ref{e1}) and (\ref{e2})})\\
=\{H \in \HB_\Pi | ({\cal T}^k_\KB, I) \models_L H\} \cup
\{head(r)~|~ r \in \Pi, \ll \!\!  {\cal T}^k_\KB, I  \!\!\gg \models_L body(r)\}\\
~~~~~~~~~~~~~~~~~~~~~~~~~~~~~~~~~~~~~~~~~~~~~~~~~~~~~~~~~~~~~~~~~~~~~~~~~~~~~~~~~~~~~~~~~~~~~~~~~~~~~~~~~~~~~~~~~~~~~~~~~~~~~~~~~~~(\mbox{by  Lemma \ref{2-3}})\\
= E_{k+1} \cup {\cal T}^{k+1}_\KB ~~~~~~~~~~~~~~~~~~~~~~~~~~~~~~~~~~~~~~~~~~~~~~~~~~~~~~~~~~~~~~~~~~~~~~~~~~~~~~~~~~~~~~~~~~~~~~~~~~~~~~~~(\mbox{by definition})
\end{array}
$$
Let ${\Phi^1}_\KB^\alpha$ and ${\cal T}_\KB^\alpha$ be the respective fixpoints. 
We then have
\begin{eqnarray}
\label{e3}
{\phi^1}_\KB^\alpha  \!=\! \{H \!\in \!\HB_\Pi | ({\cal T}^\alpha_\KB, I) \models_L \!\! H\} \cup
\{head(r)~| ~r \in \Pi,  \ll\!\!  {\cal T}^\alpha_\KB, I  \!\!\gg \models_L  body(r)\}
\end{eqnarray}
Suppose $(I,I)$ is a well-supported answer set of $\KB$. By definition, this means
\begin{eqnarray}
\label{e4}
I =\! \{H \!\in \!\HB_\Pi | ({\cal T}^\alpha_\KB, I) \models_L \!\! H\} \cup
\{head(r)~| ~r \in \Pi,  \ll\!\!  {\cal T}^\alpha_\KB, I  \!\!\gg \models_L  body(r)\}
\end{eqnarray}
We then have ${\Phi^1}_\KB^\alpha = I$. Since $\Phi^1_\KB(I,I) = I$ by definition,
it is easy to see that
$\Phi^2_\KB(I,I) = I$.  Thus,
$(I,I)$ is a fixpoint of $St_{\cal A}$, thus a
stable fixpoint of $\Phi_\KB$.

Conversely,
suppose $(I,I)$ is a stable fixpoint of ${\cal A}$. Then $\Phi^1_\KB(I,I) = I$.
Together with equation (\ref{e3}), we derive equation (\ref{e4}). As  $I$ is a model of $\KB$, it follows that it is
a well-supported answer set of $\KB$.
This completes the proof.
\end{proof}
}

\end{document}